\definecolor{myurlcolor}{rgb}{0,0,0.7}
\def\be{\begin{equation}}
\def\ee{\end{equation}}
\def\bea{\begin{eqnarray*}}
\def\eea{\end{eqnarray*}}
\def\ot{\otimes}
\theoremstyle{plain}
\newtheorem{thrm}{\protect\theoremname}
\newtheorem{defi}[thrm]{Definition}
\newtheorem{prop}[thrm]{Proposition}
\providecommand{\theoremname}{Theorem}
\newcommand{\iinner}[2]{\langle #1 | #2\rangle}
\newcommand{\out}[2]{| #1\rangle\langle #2 |}
\DeclareMathOperator{\trace}{tr}
\newcommand{\ptr}[2]{\trace_{#1}({#2})}
\newcommand{\tr}[1]{\ptr{}{#1}}
\newcommand{\id}{\mathbf{id}}
\newcommand{\ids}[1]{\id_{#1}}
\newcommand*{\myproofname}{Proof}
\def\cC{\mathcal{C}}\def\cD{\mathcal{D}}
\def\cH{\mathcal{H}}\def\cI{\mathcal{I}}
\def\cO{\mathcal{O}}
\def\cS{\mathcal{S}}
\def\bE{\mathbf{E}}
\def\bP{\mathbf{P}}
\def\rC{\mathrm{C}}
\def\mCMI{\textmd{CMI}} \def\mI{\textmd{I}} \def\mPI{\textmd{PI}}
\def\fC{\mathfrak{C}}\def\fP{\mathfrak{P}}
\theoremstyle{definition}
\theoremstyle{remark}
\begin{document}

 \author{Sunho Kim}
 \email{kimsunho81@hrbeu.edu.cn}
 \affiliation{School of Mathematical Sciences, Harbin Engineering University, Harbin 150001, China}

  \author{Chunhe Xiong}
 \email{xiongchunhe@zju.edu.cn}
 \affiliation{School of Computer and Computing Science, Zhejiang University City College, Hangzhou 310015, China}

\author{Asutosh Kumar}
 \email{asutoshk.phys@gmail.com}
 \affiliation{P.G. Department of Physics, Gaya College, Magadh University, Rampur, Gaya 823001, India}
 \affiliation{Harish-Chandra Research Institute, HBNI, Chhatnag Road, Jhunsi, Allahabad 211019, India}
 \affiliation{Vaidic and Modern Physics Research Centre, Bhagal Bhim, Bhinmal, Jalore 343029, India}

\author{Guijun Zhang}
 \affiliation{School of Mathematical Sciences, Zhejiang University, Hangzhou 310027, China}

 \author{Junde Wu}
 \email{wjd@zju.edu.cn}
 \affiliation{School of Mathematical Sciences, Zhejiang University, Hangzhou 310027, China}

\title{Quantifying dynamical coherence with coherence measures}
\begin{abstract}
Quantum coherence, like entanglement, is a fundamental resource in quantum information.
In recent years, remarkable progress has been made in formulating resource theory of coherence from a broader perspective. The notions of block-coherence and POVM-based coherence have been established.
Certain challenges, however, remain to be addressed. It is difficult to define incoherent operations directly, without requiring incoherent states, which proves a major obstacle in establishing the resource theory of  dynamical coherence.
In this paper, we overcome this limitation by introducing an alternate definition of incoherent operations, induced via coherence measures, and quantify dynamical coherence based on this definition. Finally, we apply our proposed definition to quantify POVM-based dynamical coherence. 
\end{abstract}
\maketitle

\section{Introduction}

We are quite familiar with the macro-world. 
In the micro-world, however, a given physical system may exhibit remarkable properties that are not seen in the macro-world. These properties of quantum systems offer certain advantages over classical systems in information processing tasks. Consequently, these properties may be viewed as ``resources'' in those tasks. 
We consider a certain physical situation and identify the fundamental or practical constraints associated with it: what are the tools available, what things are free of cost (these can be spent amply), and what involves a cost (these need to be used wisely). 
In quantum physics, things we consider are states and operations. In a given situation, a particular set of states and operations are free (in the sense that states can be readily prepared and operations are closed or convex) while another set of them is viewed as resources because they cannot be obtained with free states and free operations. Put differently, these objects cannot be generated in the given setting without incurring a cost.
A quantum resource theory identifies physical processes and quantum states as being either free, restricted or resources. This is a common feature of any resource theory.

Several properties of quantum sysetms have been recognized as resources in quantum information.
This recognition is profound as they can be unified and developed under the same
roof of quantum resource theories (QRTs) \cite{Chitambar, Brandao}.
Such a unification leads to better understanding of physical phenomena and new discoveries.
This helps in identifying structures and applications that are common to resource theories in general. Furthermore, QRTs study what information processing tasks are possible using the restricted operations.
We should note that the structure of resource theories goes far beyond quantum physics.
A basic goal of any resource theory is to quantify the resource and, with several resources at hand, find necessary and sufficient conditions that determine whether or not one resource can be converted to another by the set of free operations.
Quantum information theory is regarded a theory of interconversions among different resources  that are classified from diverse perspectives as classical or quantum, noisy or noiseless, and
static (i.e., quantum states) or dynamic (i.e., quantum channels).
The resource theory framework is versatile and has
been adopted in several areas of quantum information and physics.
QRTs provide a structured framework that quantitatively describes quantum properties such as quantum correlation \cite{Horodecki, Ollivier}, quantum coherence \cite{Streltsov1, Streltsov2}, quantum reference frames \cite{Bartlett, Gour}, and nonlocality \cite{Bell, Brunner}, etc.
Although there is a large degree of freedom in defining the free states and free operations, striking similarities emerge among different QRTs in terms of resource measures and resource convertibility. Consequently, various properties that appear distinct on the surface possess great similarity on a deeper structural level.
For example, quantum coherence in multipartite systems embodies the essence of entanglement and is a key ingredient for a plurality of physical phenomena in quantum physics and quantum information.

Quantum coherence marks the departure of quantum physics from classical physics.
Unlike entanglement, coherence can be present in both single and composite quantum systems.
Several studies have been devoted in recent years to characterize and quantify the coherence in a quantum physical system \cite{Baumgratz, Girolami, Lostaglio, Shao, Pires, Rana, Rastegin, Napoli, Yu, Luo, Bu, Xiong}. Measures of coherence have interesting operational interpretations.
%
Coherence as a basic resource is very promising. It has been found to offer significant quantum advantages in various quantum information processing tasks such as quantum biology \cite{Huelga}, quantum thermodynamics \cite{Mitchison, Kammerlander, Korzekwa}, quantum algorithms \cite{Hillery}, and steering \cite{Mondal}.

Mathematically, quantum coherence is represented as the presence of the off-diagonal terms of a density matrix in a preferred basis.
In this formalism, it is a static (state-based) resource. While static coherence is the degree of superposition present in a state, dynamic coherence is the ability to generate, preserve or distribute coherence. Dynamic coherence essentially studies coherence from the viewpoint of quantum operations.
Recent studies have shown that dynamical coherence, like static coherence \cite{Streltsov4, Ma, Tan}, can be measured by different dynamical quantum resources \cite{Theurer}, and provide advantages in various quantum tasks \cite{Ducuara, Uola}.

Since the incoherent (free) states are represented by diagonal matrices in a fixed orthonormal basis, they can be considered associated with a von Neumann measurement. From this connection of coherence and quantum measurement, the resource theory of coherence has been expanded to coherence based on general projective measurements such as block-coherence \cite{Aberg} and coherence based on positive-operator-valued measures (POVMs) \cite{Bischof1, Bischof2}, which are connected with each other through Naimark's theorem \cite{Paulsen, Watrous}.
The Naimark extension, i.e., embedding the states and operations into a higher-dimensional space allows for a simpler derivation of generic results, and the possibility of direct implemention of any POVM in an experiment.
POVMs describe the most general quantum measurements and can provide operational advantages compared with any projective measurement \cite{Oszmaniec}.
Because of the rich structure of POVMs, the resource theoretic formulation of POVM-based coherence generalizes and reveals features that are distinct from the standard resource theory of coherence \cite{Bischof1, Bischof2}.
Moreover, an important operational interpretation of POVM coherence in terms of {\it cryptographic randomness gain} is given in \cite{Bischof2}.
In Ref. \cite{Kim-povm} authors provided a connection of POVM-coherence with entanglement. In particular, they showed that it is possible to convert POVM-coherence into entanglement and suggested some practical strategies.

A resource theory of dynamical quantum coherence has been formulated recently by identifying classical channels as the free elements and considering the preservation of coherence a resource \cite{Saxena}. They introduced four different types of free ``superchannels'', quantified dynamical coherence using channel-divergence-based monotones, and provided operational interpretation to these monotones. 
As a comparing note to this, we propose that our definition can also be extended to redefine free superchannels between different dynamical systems. In present work, however, we focus on complementing the difficulty of defining POVM-based incoherent operations and quantifying dynamical resources.

The traditional framework based on the premise of a convex set of incoherent states, however, has certain limitations on quantifying the POVM-based dynamical coherence in situations where the existence of incoherent states is not guaranteed.
In this paper, we propose a new framework that overcomes limitations on quantifying POVM-based dynamical coherence.

The paper is organized as follows. Section \uppercase\expandafter{\romannumeral2} introduces POVM-based coherence. In Sec. \uppercase\expandafter{\romannumeral3} we define incoherent operations induced by coherence measures.
The dynamical coherence is quantified in the newly defined framework in Sec. \uppercase\expandafter{\romannumeral4}, and we prove that the proposed measure is faithful. In Sec. \uppercase\expandafter{\romannumeral5}, we apply our new framework for quantifying dynamical coherence.
Section \uppercase\expandafter{\romannumeral6} concludes with a summary.

\section{Resource theory of quantum coherence}

Since the establishment of the resource theory of quantum coherence by Baumgratz {\it et al.}, various interpretations and studies have been conducted. Moreover, the notion of block coherence associated with a projective measurement in which the rank of at least one measurement operator exceeds unity was established in a framework similar to the typical coherence \cite{Baumgratz, Aberg, Bischof1, Bischof2}. 
Block coherence is defined with respect to a projective measurement on the set of quantum states. See Refs. \cite{Aberg, Bischof1} for details on block coherence and associated measures. On the contrary, coherence of a quantum state defined with respect to an arbitrary positive-operator-valued measure (POVM) via its canonical Naimark extension is called POVM-based coherence.
In this paper, we mainly focus on POVM coherence.
Therefore, we briefly recall the resource theory of POVM-based coherence introduced by Bischof \emph{et al.} 
They considered coherence defined with respect to arbitrary POVMs, not limited to projective measurements.
The generalisation requires the following theorem:
\begin{itemize}
  \item (\emph{Naimark Theorem}) A POVM $\bE = \{E_i\}_{i=0}^{n-1}$ on $\cH^{S_0}$ with $n$ outcomes can be extended to a projective measurement $\bP = \{P_i\}_{i=0}^{n-1}$ on the Naimark space $\cH^{S}$ (with dimension $d_{S}\geq d_{S_0}$) such that
\bea
\tr{E_i\rho} = \textmd{tr}\big\{P_i(\rho\ot\out{0}{0})\big\}
\eea
holds for all states $\rho$ in $\cS_0$.
\end{itemize}

Using this theorem, the following two definitions are proposed:
\begin{enumerate}[D1]
        \item (\emph{POVM-based coherence measure}) A POVM-based coherence measure $C_{\bE}(\rho^{S_0})$ for a state $\rho^{S_0}$ in $\cS_0$ is defined in terms of the block coherence of the embedded state $\Phi[\rho^{S_0}] = \rho^{S_0}\ot \out{0}{0}^{S_1}$ with respect to a canonical Naimark extension $\bP$ of the POVM $\bE$ as,
\be\label{eq:1}
C_{\bE}(\rho^{S_0}) := C_{\bP}(\Phi[\rho^{S_0}]),
\ee
where $C_{\bP}(\rho^{S})$ is a unitarily invariant block-coherence measure on $\cS$--the set of quantum states on the Naimark space $\cH^S = \cH^{S_0}\ot \cH^{S_1}$.

     \item\label{df:2} (\emph{POVM-based incoherent operations}) A (maximally) POVM-based incoherent operation is defined as $\Lambda_{\textmd{MPI}} := \Phi^{-1}\circ \Lambda\circ \Phi$, where $\Lambda$ is a block incoherent operation  with respect to $\bP$ that satisfies $\Lambda[\cS_{\Phi}] \subseteq \cS_{\Phi}$
 for the subset $\cS_{\Phi} \subseteq \cS$ of embedded system states $\Phi[\rho^{S_0}]$. Let $\cO_{\mPI}^{\bE}$ denotes the set of POVM($\bE$)-based incoherent operations.
\end{enumerate}

The POVM-based coherence measure is well defined by the underlying block-coherence measure in Eq.(\ref{eq:1}), and has the following properties:
\begin{enumerate}[P1]
        \item (\emph{Faithfulness}): $C_{\bE}(\sigma^{S_0}) \geq 0$, and $C_{\bE}(\sigma^{S_0}) = 0$ if and only if
\be\label{eq:2}
\sum_i\overline{E}_i\sigma^{S_0}\overline{E}_i = \sigma^{S_0},
\ee
where $\overline{E}_i$ denotes the projective part of $E_i$, \emph{i.e.}, the projector onto the range of $E_i$.
     \item (\emph{Monotonicity}):  $C_{\bE}(\Lambda_{\textmd{MPI}}[\rho])\leq C_{\bE}(\rho)$ for all POVM-based incoherent
operations $\Lambda_{\textmd{MPI}}$ of the POVM $\bE$.
     \item (\emph{Convexity}): $C_{\bE}(\sum_i p_i \rho_i^{S_0}) \leq \sum_i p_i C_{\bE}(\rho_i^{S_0})$ for any set of quantum states $\{\rho_i^{S_0}\}$ and probability distribution $(p_i)$.
\end{enumerate}

We should note here that POVM-based coherence measures are determined via the block coherence measures selected in the Naimark space and the definition of POVM-based coherence.
These POVM-based coherence measures are, however, independent of the choice of the Naimark space \cite{Bischof1}.
Then, the relative entropy of POVM-based coherence measure $C_{\bE,r}(\rho^{S_0})$ is derived from the relative entropy of block-coherence measure by Eq.(\ref{eq:1}). It can be expressed in the form:\\
\parbox{8.2cm}{
\begin{eqnarray*}
  C_{\bE,r}(\rho^{S_0})&=& C_{\bP,r}(\Phi[\rho^{S_0}]) = \min_{\sigma\in\cI_{\textmd{BI}}}S(\Phi[\rho^{S_0}]\parallel\sigma)\\
  &=& H[\{p_i\}] + \sum_ip_iS(\rho_i) - S(\rho^{S_0}),
\end{eqnarray*}}\hfill
\parbox{.3cm}{\begin{eqnarray}\label{eq:3}\end{eqnarray}}\\
where $p_i = \tr{E_i\rho^{S_0}},\ \rho_i$ is the $i$-th postmeasurement state for a given measurement operator $A_i$, \emph{i.e.}, $\rho_i = (1/p_i)A_i\rho^{S_0}A_i^{\dagger}$, and the Shannon entropy $H[\{p_i\}] = -\sum_ip_i\log p_i$.

Bischof \emph{et al.} have shown in Ref. \cite{Bischof1} that these POVM-based coherence measure and POVM-based incoherent operations are independent of the choice of Naimark extension. In particular, the set of POVM-based incoherent operations can be characterized by a semidefinite feasibility problem (SDP).
Moreover, it is still uncertain if POVM-based incoherent operations can be characterized without Naimark extension; POVM-based incoherent operations are realized only by the block incoherent operations defined in D\ref{df:2}.
Unlike general coherence theory, free states do not materialize or exist in the POVM-based coherence. Also, it is either limited or impossible to define free operations, using general resource theory, within a single system.
Below we introduce an alternate definition of incoherent operations to overcome the limitations mentioned above. These can be applied more conveniently and extensively.

\section{Incoherence operations induced by convex measure of coherence}

Our aim here is to establish a new framework for the dynamical coherence analogous to the static framework.
We denote by $\cD_{S}$ the set of quantum states on system $S$.
Next we define the incoherent operations induced by a valid convex measure of coherence.
In the theory of coherence established by Baumgratz {\it et al.}, incoherent states are required to define incoherent operations. Here we propose an alternate definition of incoherent operations that does not require incoherent states necessarily.

\begin{defi}\label{df:1}
We define a completely-positive and trace-preserving (CPTP) map $\Lambda$ an incoherent operation induced by a convex coherence measure $C$ (CMIO by $C$) on $S$ if the map $\Lambda$ satisfies $C\big(\Lambda[\rho]\big)\leq \rC(\rho)$ for all $\rho\in \cD_S$. We denote by $\cO_{\mCMI}^{C}$ the set of incoherent operations induced by $C$.
\end{defi}

According to this definition, a given operation is considered incoherent if it does not increase resources for all quantum states.
Moreover, we can see that $\cO_{\mCMI}^{C}$ is a nonempty convex set which necessarily includes the identity operation $\ids{S}$, induced by convex measures, such that $C\big(\Lambda[\rho]\big)\leq \rC(\rho)$.
Also, the definition of incoherent operation induced by measures in the resource theory of general coherence (in which the existence of incoherent states is assumed) is consistent with that of the existing incoherent operation. It can be confirmed by the following proposition.

\begin{prop}\label{pr:1}
In the resource theory of general coherence, a CPTP map $\Lambda$ is a CMIO by $C$ in Def. \ref{df:1} if and only if $\Lambda$ is a general incoherent operation. That is, $\cO_{\mCMI}^{C} = \cO_{\mI}$, where $\cO_{\mI}$ is the set of incoherent operations based on the existing definition.
\end{prop}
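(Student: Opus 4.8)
The plan is to prove the two set inclusions $\cO_{\mI}\subseteq\cO_{\mCMI}^{C}$ and $\cO_{\mCMI}^{C}\subseteq\cO_{\mI}$ separately, using only the two axioms that a valid convex coherence measure $C$ obeys in the standard theory of Baumgratz \emph{et al.}: monotonicity under incoherent operations, and faithfulness, i.e.\ $C(\rho)\ge 0$ with $C(\rho)=0$ if and only if $\rho$ lies in the incoherent set $\cI$.

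The inclusion $\cO_{\mI}\subseteq\cO_{\mCMI}^{C}$ is immediate from monotonicity: if $\Lambda$ is a general incoherent operation, then $C(\Lambda[\rho])\le C(\rho)$ for every $\rho\in\cD_{S}$, which is exactly the defining condition of a CMIO in Definition~\ref{df:1}. For the reverse inclusion I would use faithfulness applied to incoherent inputs. Let $\Lambda\in\cO_{\mCMI}^{C}$ and pick any $\delta\in\cI$; such states exist since we are working inside the resource theory of general coherence. Then $C(\delta)=0$, so the CMIO inequality together with $C\ge 0$ gives $0\le C(\Lambda[\delta])\le C(\delta)=0$, forcing $C(\Lambda[\delta])=0$, and a second application of faithfulness yields $\Lambda[\delta]\in\cI$. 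As $\delta$ ranges over $\cI$ this shows $\Lambda[\cI]\subseteq\cI$, which is the defining property of a general incoherent operation, hence $\Lambda\in\cO_{\mI}$. The two inclusions give $\cO_{\mCMI}^{C}=\cO_{\mI}$; note that convexity of $C$ is not actually needed for the equivalence itself, only (as already remarked after Definition~\ref{df:1}) to see that $\cO_{\mCMI}^{C}$ is convex and contains $\ids{S}$.

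The step carrying the real content is the reverse inclusion, and the point that needs care is reconciling the two notions of ``incoherent operation'': one must read ``general incoherent operation'' as precisely a CPTP map that preserves the incoherent set $\cI$, and correspondingly the monotonicity axiom invoked in the first inclusion must be the one formulated with respect to that same class of maps. Once both notions are pinned to the same object — CPTP maps with $\Lambda[\cI]\subseteq\cI$ — the chain of (in)equalities above closes, and the remaining checks (nonnegativity of $C$, and that $C$-invariance of an incoherent state forces its image to be incoherent) use nothing beyond faithfulness.
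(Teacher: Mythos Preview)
Your proof is correct and follows essentially the same route as the paper: monotonicity gives $\cO_{\mI}\subseteq\cO_{\mCMI}^{C}$, and faithfulness applied to incoherent inputs gives the reverse inclusion via $0\le C(\Lambda[\delta])\le C(\delta)=0$. Your additional remarks that convexity plays no role in the equivalence and that one must identify ``general incoherent operation'' with the class of CPTP maps satisfying $\Lambda[\cI]\subseteq\cI$ are apt clarifications that the paper leaves implicit.
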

\begin{proof}
If $\Lambda\in\cO_{\mCMI}^{C}$ then $C\big(\Lambda[\rho]\big) = C(\rho) = 0$ for any $\rho\in \cI$, i.e., $\Lambda[\cI]\subset \cI$. This implies $\Lambda\in\cO_{\mI}$. On the other hand, if $\Lambda\in\cO_{\mI}$ then from the monotonicity of the convex coherence measure $C$, we have $C\big(\Lambda[\rho]\big) \leq C(\rho)$ for all $\rho$. Thus $\Lambda\in\cO_{\mCMI}^{C}$ by Def. \ref{df:1}.
\end{proof}

Prop. \ref{pr:1} shows that Def. \ref{df:1} is equally applicable in the resource theory of general coherence. Furthermore, we can also define POVM($\bE$)-based incoherent operations on the original single system without considering the block incoherent operations on the Naimark system and the POVM-based incoherent states. A CPTP map $\Lambda^{S_0}$ on the system $S_0$ is a CMIO by $C_{\bE}$, if it satisfies
\be\label{eq:4}
\begin{array}{ccc}
  C_{\bE}(\Lambda^{S_0}[\rho^{S_0}]) & \leq &  C_{\bE}(\rho^{S_0})\\
    & &  \\
  \parallel & &  \parallel\\
    & &  \\
  C_{\bP}(\Phi[\Lambda^{S_0}[\rho^{S_0}]])\quad & \leq & \quad C_{\bP}(\Phi[\rho^{S_0}])
  \end{array}
\ee
for any quantum state $\rho^{S_0}$, where $\bP$ is a Naimark extension of the POVM on $S_0$. This is clearly a simpler and more convenient form than the definition of POVM-based incoherent operations in D\ref{df:2}.
It should be, however, noted here that our new definition applies only to a subset of $\cS$. In other words, the POVM-based CMIOs by $C_{\bE}$ are obtained by applying Def. \ref{df:1} only to the subset $\cS_{\Phi}(=\{\Phi[\rho^{S_0}]|\rho^{S_0} \in \cS_{0}\})$ of $\cS$ which implies that the POVM-based CMIO by $C_{\bE}$ is a weaker notion than the POVM-based incoherent operation, \emph{i.e.}, $\cO_{\mPI}^{\bE} \subseteq \cO_{\mCMI}^{C_{\bE}}$ (from Prop. \ref{pr:1}, the two definitions are consistent when applied for the whole set). However, if we can find block incoherent operations satisfying $\Lambda_{\textmd{MPI}} = \Phi^{-1}\circ \Lambda\circ \Phi$ for all POVM-based CMIOs by $C_{\bE}$, we can prove $\cO_{\mPI}^{\bE} = \cO_{\mCMI}^{C_{\bE}}$, which would mean that the POVM-based CMIOs are independent of the choice of $C_{\bE}$. Here we leave this as an open problem whether or not $\cO_{\mPI}^{\bE} = \cO_{\mCMI}^{C_{\bE}}$.

Since we have already established a new definition of incoherent operations, we can consider erecting a theory to quantify dynamic resources from Def. \ref{df:1}.
In the following section, we propose a protocol for quantifying dynamical resources of incoherent operations induced by measures in the POVM-based coherence theory.

\section{Quantifying dynamical coherence induced by POVM-based convex coherence measures}

Let $\bE$ be a POVM on $\cH^{S}$ and $C_\bE$ a convex measure of coherence associated with POVM $\bE$. To quantify dynamical resources based on the measure $C_\bE$, we define the following relative function $\fP_{C_\bE}(\Theta;\Lambda)$ on CPTP maps $\Theta$ and $\Lambda$ on $\cH^{S}$:
\be\label{eq:5}
\fP_{C_\bE}(\Theta;\Lambda) := \sup_{\rho\in\cD_S}\big\{C_\bE\big(\Theta[\rho]\big)-C_\bE\big(\Lambda[\rho]\big)\big\}.
\ee
Since $\cD_S$ is a closed set, the supremum over the states in this definition is in effect the maximum, \emph{i.e.},
\bea
\fP_{C_\bE}(\Theta;\Lambda) = \max_{\rho\in\cD_S}\big\{C_\bE\big(\Theta[\rho]\big)-C_\bE\big(\Lambda[\rho]\big)\big\}.
\eea
Thus, $\fP_{C_\bE}(\Theta;\Lambda)$ is the maximal relative increase in the resource that the map $\Theta$ can extract compared with the map $\Lambda$ for all quantum states. We call the function $\fP_{C_\bE}(\Theta;\Lambda)$ the power of $\Theta$ over $\Lambda$. It should be mentioned here that we focus only on relative increases and not the relative differences in resources through the two maps, and therefore the function $\fP_{C_\bE}$ can assume negative values.
Note that $\fP_{C_\bE}$ is not symmetric in its arguments $\Theta$ and $\Lambda$, \emph{i.e.,} $\fP_{C_\bE}(\Theta;\Lambda)\neq \fP_{C_\bE}(\Lambda;\Theta)$.
Moreover, $\fP_{C_\bE}$ satisfies the convexity:\\
\\
\parbox{6.2cm}{
\begin{eqnarray*}
&&\fP_{C_\bE}(p\Theta_1+(1-p)\Theta_2;\Lambda)\\
&&\quad =  \max_{\rho\in\cD_S}\big\{C_\bE\big(p\Theta_1[\rho]+(1-p)\Theta_2[\rho]\big)-C_\bE\big(\Lambda[\rho]\big)\big\}\\
&&\quad \leq \max_{\rho\in\cD_S}\big\{pC_\bE\big(\Theta_1[\rho]\big)+(1-p)C_\bE\big(\Theta_2[\rho]\big)-C_\bE\big(\Lambda[\rho]\big)\big\}\\
&&\quad \leq p\max_{\rho_1\in\cD_S}\big\{C_\bE\big(\Theta_1[\rho_1]\big)-C_\bE\big(\Lambda[\rho_1]\big)\big\}\\
&&\qquad +\ (1-p)\max_{\rho_2\in\cD_S}\big\{C_\bE\big(\Theta_2[\rho_2]\big)-C_\bE\big(\Lambda[\rho_2]\big)\big\}\\
&&\quad =p\fP_{C_\bE}(\Theta_1) + (1-p)\fP_{C_\bE}(\Theta_2;\Lambda),
\end{eqnarray*}}\hfill
\parbox{.3cm}{\begin{eqnarray}\label{eq:the convexity of power}\end{eqnarray}}\\
where the first and the last equalities are by definition of $\fP_{C_\bE}$, the second inequality is due to the convexity of $C_\bE$, and the third inequality is obtained because the sum of the maximum values is not less than the maximum value for the sum.

Next, we propose a protocol that quantifies dynamical coherence induced by the measure $C_\bE$ using this power $\fP_{C_\bE}$.

\begin{defi}\label{df:3}
For a POVM $\bE$ on $\cH^{S}$ and $C_\bE$ the associated convex measure of coherence, we define
\be\label{eq:7}
\fC_{\mCMI}^{C_\bE}(\Theta) := \inf_{\Lambda\in \cO_{\mCMI}^{C_\bE}}|\fP_{C_\bE}(\Theta; \Lambda)|.
\ee
\end{defi}

This definition gives us interesting results. First, let us note that $\ids{S}\in \cO_{\mCMI}^{C_\bE}$ and $\fP_{C_\bE}(\Lambda;\ids{S})\leq0$ when $\Lambda\in\cO_{\mCMI}^{C_\bE}$. So, for $\Theta\notin\cO_{\mCMI}^{C_\bE}$ and all $\Lambda \in \cO_{\mCMI}^{C_\bE}$, we have
\bea
\fP_{C_\bE}(\Theta; \ids{S}) &\leq& \fP_{C_\bE}(\Theta; \ids{S})-\fP_{C_\bE}(\Lambda; \ids{S})\\
&\leq&  \max_{\rho\in\cD_S}\big\{C_\bE\big(\Theta[\rho]\big)- C_\bE\big(\Lambda[\rho]\big)\big\}\\
&=&\fP_{C_\bE}(\Theta; \Lambda).
 \eea

And if there is at least one quantum state $\rho$ such that $C_\bE\big(\Theta[\rho]\big) > C_\bE(\rho)$, then
  \bea
 \quad \fP_{C_\bE}(\Theta;\ids{S}) &=& \max_{\rho\in\cD_S}\{C_\bE\big(\Theta[\rho]\big)-C_\bE(\rho)\}\\
 &\geq & C_\bE\big(\Theta[\rho]\big) - C_\bE(\rho)>0.
  \eea
This means that if the CPTP map $\Theta\notin \cO_{\mCMI}^{C_\bE}$, the measure $\fC_{\mCMI}^{C_\bE}(\Theta)$ represents the maximum increase in coherence for all quantum states. In contract, if $\Theta\in \cO_{\mCMI}^{C_\bE}$, the relative increase with itself is zero. That is,
\be\label{eq:8}
\fC_{\mCMI}^{C_\bE}(\Theta) = \left\{
  \begin{array}{lr}
   \fP_{C_\bE}(\Theta; \ids{S})>0 \quad &\text{when}~~\Theta\notin \cO_{\mCMI}^{C_\bE},\\
   \fP_{C_\bE}(\Theta; \ids{S})=0 \quad &\text{when}~~\Theta\in \cO_{\mCMI}^{C_\bE}.
  \end{array}
\right.
\ee
Thus, from above results, this function allows us to determine whether a CPTP map is incoherent or not for the given coherence measure $C_\bE$. Next we prove that the measure $\fC_{\mCMI}^{C_\bE}$ justly quantifies the dynamical resource of coherence.

\begin{thrm}\label{th:1}
Let $\bE$ be a POVM on $\cH^{S}$. Then, for any convex measure $C_\bE$ of coherence based on $\bE$, $\fC_{\mCMI}^{C_\bE}(\Theta)$ is a convex measure of dynamical coherence.
\end{thrm}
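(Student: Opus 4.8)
The plan is to verify the two defining requirements of a ``measure of dynamical coherence'': (i) \emph{faithfulness} — $\fC_{\mCMI}^{C_\bE}(\Theta) \geq 0$ always, with equality exactly when $\Theta \in \cO_{\mCMI}^{C_\bE}$; and (ii) \emph{convexity} — $\fC_{\mCMI}^{C_\bE}(p\Theta_1 + (1-p)\Theta_2) \leq p\,\fC_{\mCMI}^{C_\bE}(\Theta_1) + (1-p)\,\fC_{\mCMI}^{C_\bE}(\Theta_2)$. (Depending on the paper's conventions one might also want monotonicity of $\fC_{\mCMI}^{C_\bE}$ under pre/post-composition with CMIOs, but the statement as phrased — mirroring the listed properties P1--P3 of $C_\bE$ — asks for faithfulness and convexity, which is what I would prove.) Faithfulness is essentially already established in the displayed case analysis \eqref{eq:8}: non-negativity is immediate from the absolute value in \eqref{eq:7}, and the dichotomy in \eqref{eq:8} shows $\fC_{\mCMI}^{C_\bE}(\Theta) = \fP_{C_\bE}(\Theta;\ids{S})$, which is $>0$ when $\Theta \notin \cO_{\mCMI}^{C_\bE}$ and $=0$ when $\Theta \in \cO_{\mCMI}^{C_\bE}$. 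So for this half I would simply assemble those observations: first note $\ids{S} \in \cO_{\mCMI}^{C_\bE}$ so the infimum is over a nonempty set and $\fC_{\mCMI}^{C_\bE}(\Theta) \leq |\fP_{C_\bE}(\Theta;\ids{S})|$; then reproduce the chain showing $\fP_{C_\bE}(\Theta;\ids{S}) \leq \fP_{C_\bE}(\Theta;\Lambda)$ for every $\Lambda \in \cO_{\mCMI}^{C_\bE}$, which forces the infimum to equal $\fP_{C_\bE}(\Theta;\ids{S})$ when that quantity is non-negative; and finally invoke the strict-positivity argument already given for $\Theta \notin \cO_{\mCMI}^{C_\bE}$.

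For convexity, I would argue directly from the definition \eqref{eq:7} together with the convexity of the power function \eqref{eq:the convexity of power}. Fix $\varepsilon > 0$ and choose $\Lambda_1, \Lambda_2 \in \cO_{\mCMI}^{C_\bE}$ that are $\varepsilon$-optimal for $\Theta_1$ and $\Theta_2$ respectively; set $\Lambda := p\Lambda_1 + (1-p)\Lambda_2$, which again lies in $\cO_{\mCMI}^{C_\bE}$ because that set is convex (as remarked after Def.~\ref{df:1}). Since on the relevant regime $\fC_{\mCMI}^{C_\bE}(\Theta) = \fP_{C_\bE}(\Theta;\ids{S})$ and $\fP_{C_\bE}$ is convex in its first argument, one gets
\[
\fC_{\mCMI}^{C_\bE}\big(p\Theta_1 + (1-p)\Theta_2\big) \leq \fP_{C_\bE}\big(p\Theta_1 + (1-p)\Theta_2;\, \ids{S}\big) \leq p\,\fP_{C_\bE}(\Theta_1;\ids{S}) + (1-p)\,\fP_{C_\bE}(\Theta_2;\ids{S}),
\]
and the right-hand side is $p\,\fC_{\mCMI}^{C_\bE}(\Theta_1) + (1-p)\,\fC_{\mCMI}^{C_\bE}(\Theta_2)$ by the identification in \eqref{eq:8}. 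The $\varepsilon$'s then only enter if one prefers to argue through general $\Lambda$ rather than through $\ids{S}$; using $\ids{S}$ as the reference map streamlines this considerably.

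The main subtlety — and the step I would be most careful about — is the case bookkeeping forced by the absolute value in \eqref{eq:7} and by the fact that $\fP_{C_\bE}$ can be negative. One must confirm that the reduction $\fC_{\mCMI}^{C_\bE}(\Theta) = \fP_{C_\bE}(\Theta;\ids{S})$ is genuinely valid in \emph{both} cases of \eqref{eq:8} before feeding it into the convexity inequality, and that a convex combination of CMIOs is a CMIO (so that $\Lambda$ above is an admissible reference map). One should also note that $p\Theta_1 + (1-p)\Theta_2$ is itself a CPTP map, so that $\fC_{\mCMI}^{C_\bE}$ is defined on it; this is routine. A minor additional point: if $\Theta_1 \in \cO_{\mCMI}^{C_\bE}$ but $\Theta_2 \notin \cO_{\mCMI}^{C_\bE}$ (or the mixture lands in or out of the set unexpectedly), one should check the inequality still goes through — it does, because the reduction to $\fP_{C_\bE}(\cdot;\ids{S})$ holds uniformly and $\fP_{C_\bE}(\Theta_i;\ids{S}) = \fC_{\mCMI}^{C_\bE}(\Theta_i) \geq 0$ in every case. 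Handling these cases cleanly is really the only obstacle; once the reduction to $\fP_{C_\bE}(\cdot;\ids{S})$ is in hand, convexity is inherited directly from \eqref{eq:the convexity of power} and faithfulness from \eqref{eq:8}.
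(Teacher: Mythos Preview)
Your proposal omits monotonicity, and this is a genuine gap. You write that the statement, ``mirroring the listed properties P1--P3 of $C_\bE$,'' asks only for faithfulness and convexity, but P2 \emph{is} monotonicity; the paper's own proof of this theorem has three parts, and part (iii) establishes $\fC_{\mCMI}^{C_\bE}(\Phi_1\Theta\Phi_2) \leq \fC_{\mCMI}^{C_\bE}(\Theta)$ for $\Phi_1,\Phi_2 \in \cO_{\mCMI}^{C_\bE}$. The argument there is again a case split: the only nontrivial case is $\Phi_1\Theta\Phi_2 \notin \cO_{\mCMI}^{C_\bE}$ (hence $\Theta \notin \cO_{\mCMI}^{C_\bE}$), where one reduces via \eqref{eq:8} to $\fP_{C_\bE}(\Phi_1\Theta\Phi_2;\ids{S})$ and then peels off $\Phi_2$ (using $C_\bE(\Phi_2[\rho]) \leq C_\bE(\rho)$ and enlarging the domain of the max) and $\Phi_1$ (directly from the CMIO property). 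You have all the ingredients for this; you simply declined to carry it out.

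For faithfulness and convexity your approach is essentially the paper's, though you try to streamline the convexity argument by using the uniform identification $\fC_{\mCMI}^{C_\bE}(\Theta) = \fP_{C_\bE}(\Theta;\ids{S})$ from \eqref{eq:8} and then applying \eqref{eq:the convexity of power} directly. The paper instead does an explicit case analysis on whether each of $\Theta_1,\Theta_2$ and their mixture lie in $\cO_{\mCMI}^{C_\bE}$; in the mixed case it uses $\fP_{C_\bE}(\Theta_2;\ids{S}) \leq \fP_{C_\bE}(\Theta_2;\Theta_2) = 0$ rather than your uniform identity. Your shortcut works once \eqref{eq:8} is granted, but be aware that the second line of \eqref{eq:8} (that $\fP_{C_\bE}(\Theta;\ids{S}) = 0$ for CMIOs, not merely $\leq 0$) is not derived in the text preceding it --- the paper actually justifies $\fC_{\mCMI}^{C_\bE}(\Theta)=0$ there via $\fP_{C_\bE}(\Theta;\Theta)=0$. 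For the convexity inequality you only need $\fP_{C_\bE}(\Theta_i;\ids{S}) \leq \fC_{\mCMI}^{C_\bE}(\Theta_i)$, which does hold in both cases, so the argument survives; but your claim of equality ``in every case'' is stronger than what has been shown. Finally, the $\varepsilon$-optimal $\Lambda_1,\Lambda_2$ you introduce are never used and should be dropped.
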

\begin{proof}
\begin{enumerate}[(i)]
  \item (Faithfulness) $\fC_{\mCMI}^{C_\bE}(\Theta) \geq 0$ from Eqs. (\ref{eq:8}), and $\fC_{\mCMI}^{C_\bE}(\Lambda) = 0$ if and only if $\Lambda\in \cO_{\mCMI}^{C_\bE}$.

  \item (Convexity) We need to prove that $\quad \fC_{\mCMI}^{C_\bE}(p\Theta_1+(1-p)\Theta_2) \leq p\fC_{\mCMI}^{C_\bE}(\Theta_1)+(1-p)\fC_{\mCMI}^{C_\bE}(\Theta_2)$
  for $0\leq p \leq1$. If $p\Theta_1+(1-p)\Theta_2 \in \cO_{\mCMI}^{C_\bE}$, the above inequality is true from the faithfulness of $\fC_{\mCMI}^{C_\bE}$. If $p\Theta_1+(1-p)\Theta_2 \notin \cO_{\mCMI}^{C_\bE}$, at least one of the operations $\Theta_1$ and $\Theta_2$ is not included in $\cO_{\mCMI}^{C_\bE}$ because $\cO_{\mCMI}^{C_\bE}$ is a convex set. Let us first consider the case when $\Theta_1\notin \cO_{\mCMI}^{C_\bE}$ and $\Theta_2\notin \cO_{\mCMI}^{C_\bE}$. Then
  \bea
&&\quad \fC_{\mCMI}^{C_\bE}(p\Theta_1+(1-p)\Theta_2)\\
  &&\qquad\qquad =\fP_{C_\bE}(p\Theta_1+(1-p)\Theta_2;\ids{S})\\
  &&\qquad\qquad \leq p\fP_{C_\bE}(\Theta_1;\ids{S}) + (1-p)\fP_{C_\bE}(\Theta_2;\ids{S})\\
  &&\qquad\qquad = p\fC_{\mCMI}^{C_\bE}(\Theta_1)+(1-p)\fC_{\mCMI}^{C_\bE}(\Theta_2),
\eea
where the first and the last equalities are due to Eq.(\ref{eq:8}), and for the inequality we use the convexity (\ref{eq:the convexity of power}) of $\fP_{C_\bE}$.
Next, without loss of generality, we assume that $\Theta_1\notin \cO_{\mCMI}^{C_\bE}$ and $\Theta_2\in \cO_{\mCMI}^{C_\bE}$. Then we have the following chain of equalities and inequalities:
\bea
  &&\quad\fC_{\mCMI}^{C_\bE}(p\Theta_1+(1-p)\Theta_2)\\
  &&\qquad\qquad =\fP_{C_\bE}(p\Theta_1+(1-p)\Theta_2;\ids{S})\\
  &&\qquad\qquad \leq p\fP_{C_\bE}(\Theta_1;\ids{S}) + (1-p)\fP_{C_\bE}(\Theta_2;\ids{S})\\
  &&\qquad\qquad \leq p\fP_{C_\bE}(\Theta_1;\ids{S}) + (1-p)\fP_{C_\bE}(\Theta_2; \Theta_2)\\
  &&\qquad\qquad =p\fC_{\mCMI}^{C_\bE}(\Theta_1)+(1-p)\fC_{\mCMI}^{C_\bE}(\Theta_2).
\eea
Here again, the first and the last equalities follow from Eq.(\ref{eq:8}), the first inequality is due to the convexity (\ref{eq:the convexity of power}) of $\fP_{C_\bE}$, and the second inequality is obtained because $\fP_{C_\bE}(\Theta_2;\ids{S}) \leq \fP_{C_\bE}(\Theta_2; \Theta_2) = 0$ for $\Theta_2 \in \cO_{\mCMI}^{C_\bE}$.

  \item (Monotonicity) For $\Phi_1, \Phi_2 \in \cO_{\mCMI}^{C_\bE}$, we show that $\fC_{\mCMI}^{C_\bE}(\Phi_1\Theta\Phi_2) \leq \fC_{\mCMI}^{C_\bE}(\Theta)$ for any CPTP map $\Theta$. If $\Theta\in \cO_{\mCMI}^{C_\bE}$, then $\Phi_1\Theta\Phi_2\in \cO_{\mCMI}^{C_\bE}$ because $C_\bE\big(\Phi_1\Theta\Phi_2[\rho]\big)\leq C_\bE\big(\Theta\Phi_2[\rho]\big)\leq C_\bE\big(\Phi_2[\rho]\big)\leq C_\bE(\rho)$ for all $\rho \in \cD_S$. Therefore, we have $\fC_{\mCMI}^{C_\bE}(\Phi_1\Theta\Phi_2) = \fC_{\mCMI}^{C_\bE}(\Theta) = 0$. And, if $\Phi_1\Theta\Phi_2\in\cO_{\mCMI}^{C_\bE}$ for any CPTP map $\Theta$, we can easily see that $0=\fC_{\mCMI}^{C_\bE}(\Phi_1\Theta\Phi_2) \leq \fC_{\mCMI}^{C_\bE}(\Theta)$.
       Finally, we consider the case $\Phi_1\Theta\Phi_2\notin \cO_{\mCMI}^{C_\bE}$  (it is possible only when $\Theta\notin \cO_{\mCMI}^{C_\bE}$). In this case, we have\\
\parbox{6.8cm}{
\begin{eqnarray*}
\fC_{\mCMI}^{C_\bE}(\Phi_1\Theta\Phi_2) &=& \fP_{C_\bE}(\Phi_1\Theta\Phi_2;\ids{S})\\
&=& \max_{\rho\in\cD_S}\big\{C_\bE\big(\Phi_1\Theta\Phi_2[\rho]\big)-C_\bE(\rho)\big\}\\
&\leq& \max_{\rho\in\cD_S}\big\{C_\bE\big(\Phi_1\Theta\Phi_2[\rho]\big)-C_\bE\big(\Phi_2[\rho]\big)\big\}\\
&\leq& \max_{\rho\in\cD_S}\big\{C_\bE\big(\Phi_1\Theta[\rho]\big)-C_\bE(\rho)\big\}\\
&\leq& \max_{\rho\in\cD_S}\big\{C_\bE\big(\Theta[\rho]\big)-C_\bE(\rho)\big\}\\
&=& \fP_{C_\bE}(\Theta;\ids{S}) = \fC_{\mCMI}^{C_\bE}(\Theta),
\end{eqnarray*}}\hfill
\parbox{.3cm}{\begin{eqnarray}\label{eq:the monotonicity of power}\end{eqnarray}}\\
where we use $C_\bE\big(\Phi[\rho]\big) \leq C_\bE(\rho)$ for $\Phi \in \cO_{\mCMI}^{C_\bE}$ in the third and fifth lines, and inequality in the fourth line follows from an increase of the set of states over which we maximize because $\Phi_2[\cD_S] \subset \cD_S$.
\end{enumerate}
\end{proof}


\section{Application of Dynamical Coherence Measure for a POVM with four elements}

The advantages of $\fC_{\mCMI}^{C_\bE}$ are that dynamical resources can be quantified through a relatively simple calculation, as in Eqs. (\ref{eq:8}), and that an extensive application is possible even in the special cases where free states are not given. In this section we illustrate this for certain situations.
Let $\bE = \{\frac{1}{2}\out{\phi_k}{\phi_k}\}_{k=0}^3$ be a POVM on $\cH^{S_0} = \cC^2$ with $\ket{\phi_k} = 1/\sqrt{2}(\ket{0}+\omega^{k}\ket{1})$, where $\omega = \exp(\pi i/2)$, and $\bP = \{\out{\varphi_k}{\varphi_k}\}_{k=0}^3$ be a Naimark extension of it on $\cH^{S}(=\cH^{S_0}\ot\cH^{S_1})=\cC^4$ (here $\cC^4$ is equivalent to $\cC^2\ot\cC^2$ with the identification $\ket{0}^{\cC^4} = \ket{00}^{\cC^2\ot\cC^2}, \ket{1}^{\cC^4} = \ket{01}^{\cC^2\ot\cC^2}, \ket{2}^{\cC^4} = \ket{10}^{\cC^2\ot\cC^2}$ and $\ket{3}^{\cC^4} = \ket{11}^{\cC^2\ot\cC^2}$), where
\bea
\ket{\varphi_0}&=&\frac{1}{2}\big(\ket{0}+\ket{1}+\sqrt{2}\ket{2}\big)\\
\ket{\varphi_1}&=&\frac{1}{2}\big(\ket{0}+i\ket{1}-\exp(\frac{\pi i}{4})\ket{2}+\exp(\frac{\pi i}{4})\ket{3}\big)\\
\ket{\varphi_2}&=&\frac{1}{2}\big(\ket{0}-\ket{1}-\sqrt{2}i\ket{3}\big)\\
\ket{\varphi_3}&=&\frac{1}{2}\big(\ket{0}-i\ket{1}-\exp(-\frac{\pi i}{4})\ket{2}+\exp(\frac{3\pi i}{4})\ket{3}\big).
\eea
Using the formula (\ref{eq:3}), the relative entropy of POVM-based coherence for quantum state $\rho\in \cD_{S_0}$ is
\be
C_{\bE,r}(\rho) = C_{\bP,r}(\Phi[\rho]) = H[\{p_k\}] - S(\rho),
\ee
where $p_k = \frac{\bra{\phi_k}\rho\ket{\phi_k}}{2}.$

\begin{figure}[t]
\centering
\includegraphics[height=.54\textheight]{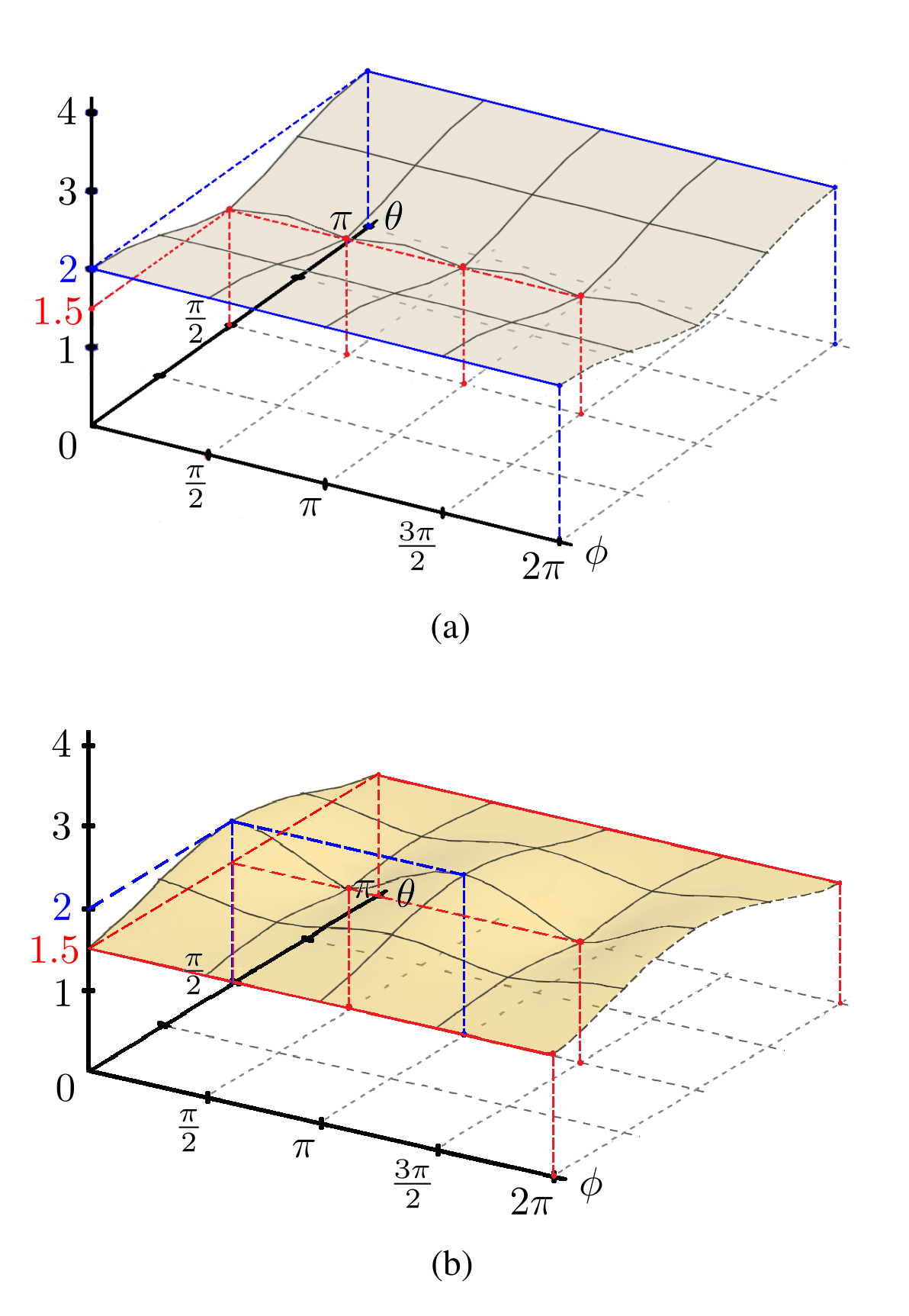}
\caption{\label{fig:1} (a) Graph for $C_{\bE,r}(\out{\psi_{\theta,\phi}}{\psi_{\theta,\phi}})$. Coherence is quantified via $C_{\bE,r}$ for pure states $|\psi_{\theta,\phi}\rangle$. We have a minimum of $1.5$ at $\theta=\frac{\pi}{2}$ and $\phi=0, \frac{\pi}{2}, \pi$ or $\frac{3\pi}{2}$, and a maximum of $2$ at $\theta=0$ or $\pi$.  (b) Graph for $C_{\bE,r}(\Theta_{\textmd{max}}[\out{\psi_{\theta,\phi}}{\psi_{\theta,\phi}}])$. The unitary operation $\Theta_{\textmd{max}}$ causes a maximum increment of $0.5$ in coherence where $\theta=\frac{\pi}{2}$ and $\phi=0$ or $\pi$.}
\end{figure}

We first investigate how the degree of POVM-based coherence is distributed for the pure quantum states.
Consider a state $\ket{\psi_{\theta,\phi}} = \cos{(\frac{\theta}{2})}\ket{0}+\exp(i\phi)\sin{(\frac{\theta}{2})}\ket{1}\ (0\leq\theta\leq \pi,\ 0\leq \phi <2\pi)$ on the Bloch sphere surface. We have
\be
C_{\bE,r}(\out{\psi_{\theta,\phi}}{\psi_{\theta,\phi}}) = H[\{p_k\}],
\ee
where $p_k = |\cos{(\frac{\theta}{2})} + \exp\{k(\frac{\pi i}{2})-i\phi\}\sin{(\frac{\theta}{2})}|^{2}$ for $k=0,1,2,3$ \big(see Fig. \ref{fig:1}(a)\big).
Regardless of $\phi$, when $\theta = 0$ or $\pi$, we have the maximal POVM-based coherence $C_{\bE,r}(\out{\psi_{\theta,\phi}}{\psi_{\theta,\phi}})= 2$ (the maximal POVM-based coherence is obtained in pure states due to the convexity of $C_{\bE,r}$).
On the contrary, when $\theta = \frac{\pi}{2}$ and $\phi= 0,\frac{\pi}{2},\pi$ or $\frac{3\pi}{2}$, we find $C_{\bE,r}(\out{\psi_{\theta,\phi}}{\psi_{\theta,\phi}})= 1.5$ as the minimum value for the pure states.

Now, we consider the degree of POVM-based coherence in the mixed states. For given $\ket{\psi_{\theta,\phi}}$, let there exists another pure state $\ket{\psi'_{\theta,\phi}}$ such that $\iinner{\psi_{\theta,\phi}}{\psi'_{\theta,\phi}}=0$. Then, every quantum state can be represented in the following form:
\be
\rho_{p,\theta,\phi} = p\out{\psi_{\theta,\phi}}{\psi_{\theta,\phi}} + (1-p)\out{\psi'_{\theta,\phi}}{\psi'_{\theta,\phi}},
\ee
with $0\leq p\leq 1, 0\leq \theta \leq \pi$ and $0\leq \phi < 2\pi$.
When $p$ is fixed, we have the following maximum and minimum values of $C_{\bE,r}(\rho_{p,\theta,\phi})$:
\bea
\max_{\theta,\phi}\{C_{\bE,r}(\rho_{p,\theta,\phi})\} = 2-H(p)\quad \ \ \  &&\textmd{when}\ \theta=0\ \textmd{or}\ \pi,\\
\min_{\theta,\phi}\{C_{\bE,r}(\rho_{p,\theta,\phi})\} = 1.5-\frac{H(p)}{2}\quad &&\textmd{when}\ \theta=\frac{\pi}{2}, \phi =0\ \textmd{or}\ \frac{\pi}{2},
\eea
where $H(p) = S(\rho_{p,\theta,\phi}) = -p\log p -(1-p)\log (1-p)$. Thus, we obtain an inequality
\be\label{eq:13}
1.5-\frac{H(p)}{2}\leq C_{\bE,r}(\rho_{p,\theta,\phi})\leq 2-H(p)
\ee
for any quantum state $\rho_{p,\theta,\phi}$ with fixed $p$. Moreover, we obtain the minimal POVM-based coherence $C_{\bE,r}(\rho_{p,\theta,\phi}) = 1$ when $p=\frac{1}{2}$.
We hereby confirm that the POVM-based incoherent state for $\bE$ does not exist in $\cD_S$.

Next, we quantify dynamical resources using the measure $\fC_{\mCMI}^{C_{\bE,r}}$ defined in the preceding section.
We consider increasing the degree of POVM-based coherence via the unitary operations $\Theta$, \emph{i.e.}, $\Theta(\rho) = U\rho U^{\dagger}$ with $U$ a unitary operator. We first note that any unitary operation $\Theta$ is only involved in changing $\theta$ and $\phi$ without changing $p$, \emph{i.e.}, $\Theta(\rho_{p,\theta,\phi}) = \rho_{p,\theta',\phi'} \ (0\leq \theta' \leq \pi, 0\leq \phi' < 2\pi)$. Thus, for any state $\rho_{p,\theta,\phi}$, we have
\bea
C_{\bE,r}\big(\Theta[\rho_{p,\theta,\phi}]\big) - C_{\bE,r}(\rho_{p,\theta,\phi}) &\leq& 2-H(p)- \big(1.5-\frac{H(p)}{2}\big)\\
 &=& 0.5 - \frac{H(p)}{2} \leq 0.5,
\eea
where the first inequality is due to Ineq. (\ref{eq:13}). Therefore, $\fC_{\mCMI}^{C_{\bE,r}}(\Theta) \leq 0.5$ for any unitary operation $\Theta$, and the equality is saturated with the following unitary operator:
\bea
U_{\textmd{max}} = |0\rangle \frac{\langle 0|+\langle 1| }{\sqrt{2}} + |1\rangle \frac{\langle 0|- \langle 1| }{\sqrt{2}}.
\eea

The unitary operator serves to move symmetrically the states on Bloch sphere around the axis from $\cos(\frac{\pi}{8})\ket{0}+\sin(\frac{\pi}{8})\ket{1}$ to $\cos(\frac{3\pi}{8})\ket{0}-\sin(\frac{3\pi}{8})\ket{1}$, and when $\Theta_{\textmd{max}}$ is a unitary operation defined by $U_{\textmd{max}}$, \emph{i.e.}, $\Theta_{\textmd{max}}[\rho] = U_{\textmd{max}}\rho U_{\textmd{max}}^{\dagger}$, the change in POVM-based coherence for the states on Bloch sphere by $\Theta_{\textmd{max}}$ is given by the difference in Fig. \ref{fig:1}(a) and Fig. \ref{fig:1}(b). In particular, when $p=0$ or $1,\theta=\frac{\pi}{2}$ and $\phi=0$ or $\pi$, the increase in the POVM-based coherence due to $\Theta_{\textmd{max}}$ amounts to $0.5$ which means $\fC_{\mCMI}^{C_{\bE,r}}(\Theta_{\textmd{max}}) = 0.5$.
Since the action of the unitary operator $U_{\textmd{max}}$ is to move symmetrically the states on Bloch sphere, the overall mean density of POVM-based coherence remains unchanged, and an increase in the POVM-based coherence can be expected only for the subset of the quantum states.

We also find the incoherent operations induced by measure $C_{\bE,r}$ through the cycle of degree of POVM-based coherence for $\theta$ and $\phi$.
Fig. \ref{fig:1}(a) shows that the unitary operations, which symmetrically move the states on Bloch sphere around some axis, maintain the POVM-based coherence.
Let us consider, for example, a unitary operation $\Lambda_{\textmd{min}}[\rho] = U_{\textmd{min}}\rho U_{\textmd{min}}^{\dagger}$ for $\rho \in \cD_S$ and a unitary operator $U_{\textmd{min}} =\out{0}{1}+\out{1}{0}$.
Then
\bea
U_{\textmd{min}}\ket{\psi_{\theta,\phi}} &=& e^{i\phi}\sin(\frac{\theta}{2})\ket{0}+ \cos(\frac{\theta}{2})\ket{1}\\
&=& \cos(\frac{\pi-\theta}{2})\ket{0}+ e^{i(2\pi-\phi)}\sin(\frac{\pi-\theta}{2})\ket{1},
\eea
and $C_{\bE,r}(\out{\psi_{\pi-\theta,2\pi-\phi}}{\psi_{\pi-\theta,2\pi-\phi}}) = C_{\bE,r}(\out{\psi_{\theta,\phi}}{\psi_{\theta,\phi}})$ from Fig. \ref{fig:1}(a). Hence,
\be
C_{\bE,r}\big(\Lambda_{\textmd{min}}[\out{\psi_{\theta,\phi}}{\psi_{\theta,\phi}}]\big) = C_{\bE,r}(\out{\psi_{\theta,\phi}}{\psi_{\theta,\phi}})
\ee
for $0\leq\theta\leq \pi,\ 0\leq \phi <2\pi$. Above result is also valid for any mixed state obtained by combining such pure states, i.e., $C_{\bE,r}\big(\Lambda_{\textmd{min}}[\rho]\big) = C_{\bE,r}(\rho)$ for $\rho \in \cD_{S}$, because $\Lambda_{\textmd{min}}$ preserves the entropies after the implementation of $\bE$ without changing phase. Therefore, $\fC_{\mCMI}^{C_{\bE,r}}(\Lambda_{\textmd{min}}) = 0$. This means that $\Lambda_{\textmd{min}}$ is an incoherent operation induced by the coherence measure $C_{\bE,r}$ (CMIO by $C_{\bE,r}$). Again, the same result is seen for a unitary operator $U_{\textmd{min}}' = \out{0}{0} - \out{1}{1}$.

Furthermore, consider the mixed unitary operation $\Theta_{\textmd{mixed}}(\rho) = \sum_i p_i \Theta_{i}(\rho)$ for some unitary operations $(\Theta_{i})$ and a probability distribution $(p_i)$. Then $\fC_{\mCMI}^{C_{\bE,r}}(\Theta_{\textmd{mixed}}) \leq \sum_i p_i\fC_{\mCMI}^{C_{\bE,r}}(\Theta_{i}) \leq 0.5$ follows from the convexity of $\fC_{\mCMI}^{C_{\bE,r}}$. Likewise, we find the following CMIO by $C_{\bE,r}$:
\bea
\Lambda_{\textmd{mixed}} = p\Lambda_{\textmd{min}} + (1-p)\Lambda'_{\textmd{min}},
\eea
where $\Lambda'_{\textmd{min}}[\rho] = U'_{\textmd{min}}\rho (U'_{\textmd{min}})^{\dagger}$ for $\rho \in \cD_S$, and $0< p < 1$.

\section{Conclusion and Summary}

We presented an alternate definition of incoherent operations, induced by coherence measures, to overcome the limitations of the traditional framework of POVM-based resource theory. This is not only consistent with the definition of existing incoherent operations in the resource theory of general coherence, but also makes it simpler to determine whether or not operations are incoherent, even if the free states are not given or cannot be determined.
Moreover, we proposed a protocol that allow us to quantify dynamical coherence when no free states are given, based on the newly defined incoherent operation. Finally, as an example, we applied our dynamical coherence theory in the case of POVM-based coherence.

\begin{acknowledgments}
This project is supported by the National Natural Science Foundation of China (Grants No. 12050410232, No. 12031004, and No. 61877054).
\end{acknowledgments}


%

\end{document}